\newcolumntype{L}[1]{>{\raggedright\let\newline\\\arraybackslash\hspace{0pt}}m{#1}}
\DeclarePairedDelimiter{\ceil}{\lceil}{\rceil}
 \newcommand{\attripart}{\textsc{AttriPart}}
 \newcommand{\localforecasting}{\textsc{LocalForecasting}}
 \newcommand{\localproximity}{\textsc{LocalProximity}}
 \newcommand{\expandedneighborhood}{\textsc{ExpandedNeighborhood}}
\newcommand{\hide}[1]{}
\begin{document}
\title{Local Partition in Rich Graphs}
% \titlenote{Produces the permission block, and
%   copyright information}
%\subtitle{Extended Abstract}
% \subtitlenote{The full version of the author's guide is available as
%   \texttt{acmart.pdf} document}

\author{Scott Freitas}
\affiliation{%
  \institution{Arizona State University}
  \city{Tempe}
  \state{Arizona}
  \postcode{85028}
}
\email{scott.freitas@asu.edu}

\author{Hanghang Tong}
\affiliation{%
  \institution{Arizona State University}
  \city{Tempe}
  \state{Arizona}
  \postcode{85028}
}
\email{hanghang.tong@asu.edu}

\author{Nan Cao}
\affiliation{
  \institution{Tongji University}
  \city{Shanghai} 
  \state{China} 
}
\email{nan.cao@gmail.com}

\author{Yinglong Xia}
\affiliation{
  \institution{Huawei}
  \city{Santa Clara} 
  \state{California} 
  }
\email{yinglong.xia@huawei.com}

% The default list of authors is too long for headers.
\renewcommand{\shortauthors}{S. Freitas et al.}

\begin{abstract} Local graph partitioning is a key graph mining tool that allows researchers to identify small groups of interrelated nodes (e.g. people) and their connective edges (e.g. interactions). Because local graph partitioning is primarily focused on the network structure of the graph (vertices and edges), it often fails to consider the additional information contained in the attributes. In this paper we propose---(i) a scalable algorithm to improve local graph partitioning by taking into account \textit{both} the network structure of the graph and the attribute data and (ii) an application of the proposed local graph partitioning algorithm (\attripart) to predict the evolution of local communities (\localforecasting). Experimental results show that our proposed \attripart\ algorithm finds up to \textbf{1.6$\times$} denser local partitions, while running approximately \textbf{43$\times$} faster than traditional local partitioning techniques (PageRank-Nibble \cite{PageRank-Nibble}). In addition, our \localforecasting\ algorithm shows a significant improvement in the number of nodes and edges correctly predicted over baseline methods.

\end{abstract}

%
% The code below should be generated by the tool at
% http://dl.acm.org/ccs.cfm
% Please copy and paste the code instead of the example below.
%
% \begin{CCSXML}
% <ccs2012>
% <concept>
% <concept_id>10002950.10003624.10003633.10010917</concept_id>
% <concept_desc>Mathematics of computing~Graph algorithms</concept_desc>
% <concept_significance>500</concept_significance>
% </concept>
% <concept>
% <concept_id>10002951.10003227.10003351.10003444</concept_id>
% <concept_desc>Information systems~Clustering</concept_desc>
% <concept_significance>500</concept_significance>
% </concept>
% <concept>
% <concept_id>10002951.10003260.10003261.10003263.10003265</concept_id>
% <concept_desc>Information systems~Page and site ranking</concept_desc>
% <concept_significance>300</concept_significance>
% </concept>
% </ccs2012>
% \end{CCSXML}

% \ccsdesc[500]{Mathematics of computing~Graph algorithms}
% \ccsdesc[500]{Information systems~Clustering}
% \ccsdesc[300]{Information systems~Page and site ranking}

% \keywords{Local partitioning, rich graphs, PageRank, conductance}

\maketitle

\section{Introduction}
\textbf{Motivation.} With the rise of the big data era an exponential amount of network data is being generated at an unprecedented rate across many disciplines. One of the critical challenges before us is the translation of this large-scale network data into meaningful information. A key task in this translation is the identification of \textit{local communities} with respect to a given seed node \footnote{we interchangeably refer to local community as a local partition}. In practical terms, the information discovered in these local communities can be utilized in a wide range of high-impact areas---from the micro (protein interaction networks \cite{Protein1} \cite{Protein2}) to the macro (social \cite{Social1} \cite{LocalSocial} and transportation networks). %to biological \cite{Biology}

\textbf{Problem Overview.} How can we quickly determine the local graph partition around a given seed node? This problem is traditionally solved using an algorithm like Nibble \cite{Nibble}, which identifies a small cluster in time proportional to the size of the cluster, or PageRank-Nibble, \cite{PageRank-Nibble} which improves the running time and approximation ratio of Nibble with a smaller polylog time complexity. While both of these methods provide powerful techniques in the analysis of network structure, they fail to take into account the attribute information contained in many real-world graphs. Other techniques to find improved rank vectors, such as attributed PageRank \cite{AttriRank}, lack a generalized conductance metric for measuring cluster "goodness" containing attribute information. In this paper, we propose a novel method that combines the network structure and attribute information contained in graphs---to better identify local partitions using a generalized conductance metric.

%Other techniques such as attributed PageRank \cite{AttriRank} find improved rank vectors, however, they lack a generalized conductance metric for measuring cluster "goodness" containing attribute information. In this paper, we propose a novel method that combines the network structure and attribute information contained in graphs to identify better local partitions using a generalized conductance metric. 

\textbf{Applications.} Local graph partition plays a central role in many application scenarios. For example, a common problem in \textit{recommender systems} is that of social media networks and determining how a local community will evolve over time. The proposed \localforecasting\ algorithm can be used to determine the evolution of local communities, which can then assist in user recommendations. Another example utilizing social media networks is {\em ego-centric network identification}, where the goal is to identify the locally important neighbors relative to a given person. To this end, we can use our \attripart\ algorithm to identify better ego-centric networks using the graph's network structure and attribute information. Finally, newly arrived nodes (i.e., {\em cold-start nodes}) often contain few connections to their surrounding neighbors, making it difficult to ascertain their grouping to various communities. The proposed \localforecasting\ algorithm mitigates this problem by introducing additional attribute edges (link prediction), which can assist in determining which local partitions the cold start nodes will belong to in the future.

\hide{
There are multiple applications of this research in the areas of (1) recommender systems, (2) ego-centric network identification and (3) cold start scenarios. 

}
\textbf{Contributions.} Our primary contributions are three-fold: 
\begin{itemize}
    \item The formulation of a graph model and generalized conductance metric that incorporates both attribute and network structure edges. %\footnote{we often refer to topological edges as a reference to the network structure of the graph}.
    \item The design and analysis of local clustering algorithm \attripart\ and local community prediction algorithm \localforecasting. Both algorithms utilize the proposed graph model, modified conductance metric and novel subgraph identification technique.
    \item The evaluation of the proposed algorithms on three real-world datasets---demonstrating the ability to rapidly identify denser local partitions compared to traditional techniques.
\end{itemize}

\textbf{Deployment.} The local partitioning algorithm \attripart\ is currently \textbf{deployed} to the PathFinder \cite{RapidAnalysisNetworkConnectivity} web platform (\url{www.path-finder.io}), with the goal of assisting users in mining local network connectivity from large networks. The design and deployment challenges were wide ranging, including---(i) the integration of four different programming languages, (ii) obtaining real-time performance with low cost hardware and (iii) implementation of a visually appealing and easy to use interface. We note that the \attripart\ algorithm, deployed to the web platform, has \textit{performance nearly identical} to the results presented in section \ref{ExperimentsAndDiscussion}.

\begin{figure}[H]
\includegraphics[height=5.27cm, width=8.5cm]{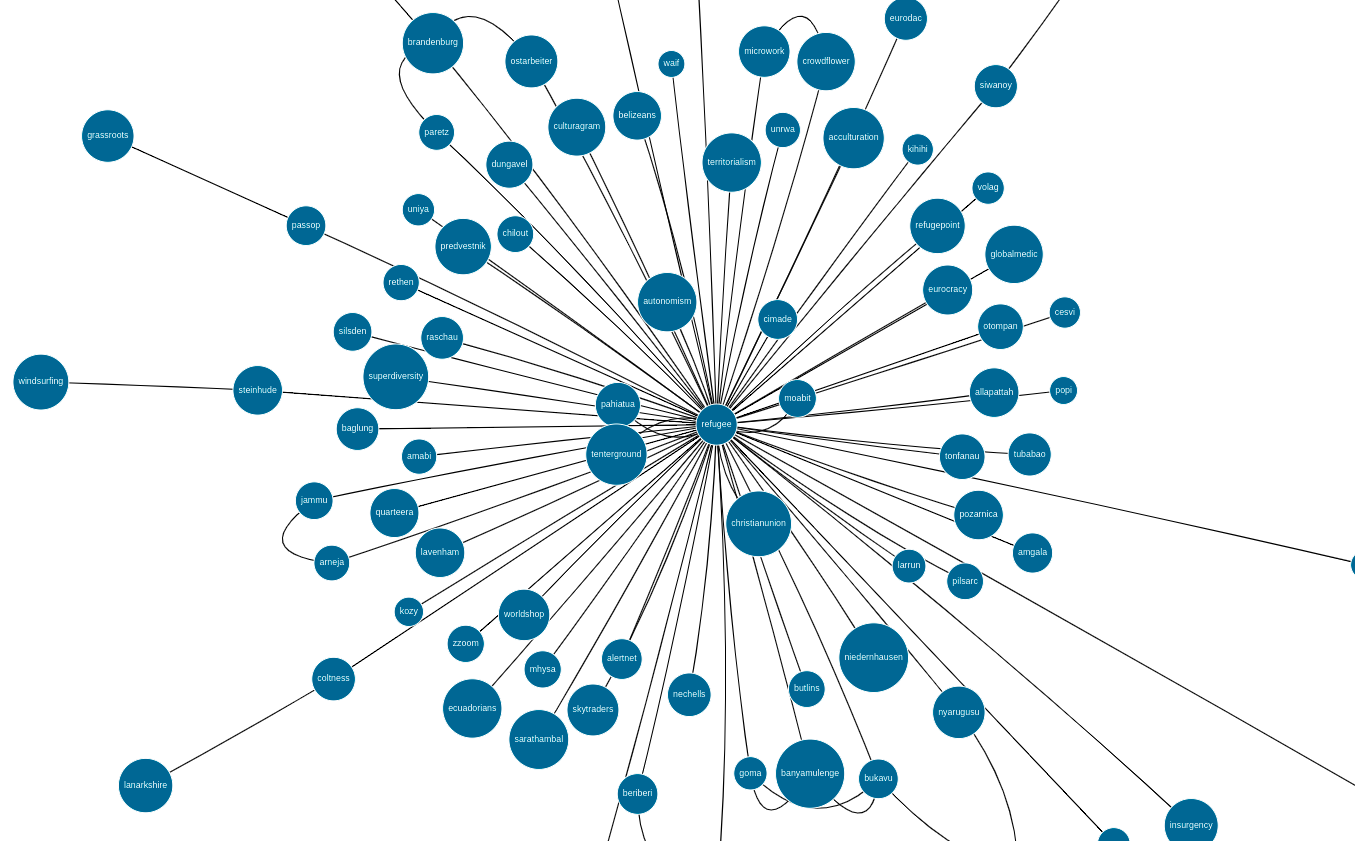}
\caption{Close-up of the \attripart\ algorithm on the PathFinder web platform.}
\label{PlatformPicture}
\end{figure}

This paper is organized as follows---Section 2 defines the problem of local partitioning in rich graphs; Section 3 introduces our proposed model and algorithms; Section 4 presents our experimental results on multiple real-world datasets; Section 5 reviews the related literature; and Section 6 concludes the paper.

% To incorporate the additional information contained in the modified graph, we introduce a new conductance metric that takes into account the additional information contained in the attribute edges. Given a set of vertices in a graph, the new conductance metric measures the importance of a given cut by weighting it according to the similarity of the node containing an external edge---to both the set and the rest of the graph. The similarity is determined based on the strength of the connections to the set relative to the strength of its connections to the rest of the graph---with a lower conductance still indicative that a set of vertices is a strong cluster.

\section{Problem Definition}\label{GraphDefs}
In this paper we consider three graphs---(1) an undirected, unweighted structure graph $\textbf{G} = (V, E)$, (2) an undirected, weighted attribute graph $\textbf{A} = (V, E)$ and (3) a combined graph consisting of both $\textbf{G}$ and $\textbf{A}$ that is undirected and weighted $\textbf{B} = (V, E)$. In each graph, $V$ is the set of vertices, $E$ is the set of edges, $n$ is the number of vertices and $m$ is the number of edges (i.e. $G$, $H$ and $B$ contain the same number of vertices and edges by default). In order to denote the degree centrality we say $\delta(v)$ is the degree of vertex $v$. We use bold uppercase letters to denote matrices (e.g. \textbf{G}) and bold lowercase letters to denote vectors (e.g. \textbf{v}).

%which is compiled from the node attribute information~\cite{AttriRank},

For the ease of description, we define terms that are interchangeably used throughout the literature and this paper---(a) we refer to network as a graph, (b) node is synonymous with vertex, (c) local partition is referred to as a local cluster, (d) seed node is equivalent to query and start vertex, (e) topological edges of the graph refers to the network structure of the graph, (f) a rich graph is a graph with attributes on the nodes and or edges.

Having outlined the notation, we define the problem of local partitioning in rich graphs as follows:

\vspace{0.25cm}
\textit{Problem 1. Local Partitioning in Rich Graphs}

\noindent
\textit{\textbf{Given:} (1) an undirected, unweighted graph $\textbf{G} = (V,E)$, (2) a seed node $q \in V$ and (3) attribute information for each node $v \in V$ containing a k-dimensional attribute vector \bm{$x_i$}---with an attribute matrix $\textbf{X} = [\bm{x_1, x_2, ..., x_n}] \in \mathbb{R}^{k \times n}$ representing the attribute vector for each node $v$.}

\noindent
\textit{\textbf{Output:} a subset of vertices $S \subset V$ such that $S$ best represents the local partition around seed node $q$ in graph \bm{$B$}.}

% \textbf{Intermediaries.} We create an attribute network $A = (V,E)$ which mimics the structure of $G$ with edge weighting given according to node-node similarity as seen in Figure \ref{GraphModels}. From there, we create a combined graph $B = (V,E)$ as a linear combination of each respective edge $(u, v)$ in $G$ and $A$.

\vspace{-0.2cm}
\begin{table}[h!]
\caption{\large{Symbols and Definition}}
\label{table:1}
\vspace{-0.35cm}
\begin{center}
 \begin{tabular}{|c | L{6.5cm}|} 
 \hline
 \textbf{Symbol} & \textbf{Definition} \\ [0.5ex] 
 \hline
 \bm{$G$}, \bm{$A$}, \bm{$B$} & network, attribute \& combined graphs \\ 

%  \bm{$A$} & attribute graph \\

%  \bm{$B$} & combined network structure and attribute graph \\
 
 $n$, $m$ & number of nodes \& edges in graphs \bm{$G$}, \bm{$A$}, \bm{$B$} \\
 
 %$m$ & number of edges for graphs \bm{$G$}, \bm{$A$}, \bm{$B$} \\
 
 $m_e$ & number of edges in \bm{$B$} after \localforecasting \\
 
 $p$, $m_p$ & number of nodes \& edges in \bm{$T$} \\
 \hline\hline
 
 \bm{$s$}, $q$, $\phi_o$ & preference vector, seed node \& target conductance \\

 \bm{$W$} & lazy random walk transition matrix \\

 %$q$ & seed node \\

 $S$ & set of vertices representing local partition \\

 %$\phi_o$ & target conductance \\

 $\epsilon$, $\epsilon_t$ & rank truncation and iteration thresholds \\

 $t_m$, $n_s$ & rank vector iterations; number of vertices to sweep  \\

 $\alpha_n$, $\alpha_r$ & \attripart\ \& \localproximity\ teleport values \\

 %$\epsilon_t$ & rank iteration threshold \\
 \hline\hline
 $t_s$, $n_w$ & subgraph relevance threshold \& number of walks \\
 
 %$\alpha_r$ & teleportation value in random walk  \\
 
 %$n_w$ & number of random walks \\
 
 %\bm{$T$} & subgraph of $B$ \\
 
 \bm{$T$}; $D$, $L$ & subgraph of \bm{$B$}; walk count dictionary \& list  \\ %dictionary containing walk count of each vertex

 %$L$ & list of each node's walk count  \\

 $\mu(L)$, $\sigma(L)$ & mean and standard deviation of $L$  \\

 %$\sigma(L)$ & standard deviation $L$  \\
 
 %$p$, $m_p$ & number of nodes and edges in $T$ respectively \\
 \hline\hline
 $t_e$ & edge addition threshold \\
 \hline
\end{tabular}
\end{center}
\end{table}

\section{Methodology}
This section first describes the preliminaries for our proposed algorithms, including the graph model and modified conductance metric. Next, we introduce each proposed algorithm---(1) \localproximity, (2) \attripart\ and (3) \localforecasting. Finally, we provide an analysis of the proposed algorithms in terms of effectiveness and efficiency. 

\subsection{Preliminaries}
\paragraph{{Graph Model.}} Topological network \bm{$G$} represents the network structure of the graph and is formally defined in Eq.~\eqref{Graph_G}. Attribute network \bm{$A$} represents the attribute structure of the graph and is computed based on the similarity for every edge  $(u, v) \in E$ in \bm{$G$}. In order to determine the similarity between the two nodes, we use Jaccard Similarity $J(u, v)$. %as defined in Eq. \ref{JaccardSimilarity} to compare the attribute set of the two nodes $J(x_u, x_v)$. 
\bm{$A$} is formally defined in Eq.~\eqref{Graph_A} where 0.05 is the default attribute similarity between an edge $(u, v) \in E$ in \bm{$G$} if $J(x_u, x_v) = 0$. In addition, $t_e$ is the similarity threshold for the addition of edges not in \bm{$G$} where $0 < t_e \leq 1$. Combined Network \bm{$B$} represents the combined graph of \bm{$G$} and \bm{$A$} and is formally defined in Eq.~\eqref{Graph_B}.

\begin{figure} 
\includegraphics[height=3.01cm, width=8.5cm]{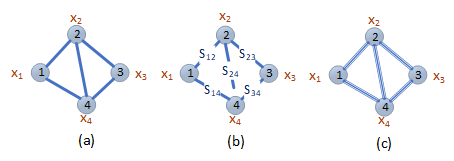}
\caption{Example of the three graph models: (a) graph \bm{$G$} is the network structure with nodes $\{1, 2, 3, 4\}$ and corresponding attribute set $\{\bm{x_1, x_2, x_3, x_4}\}$ given as input. (b) Graph \bm{$A$} is the attribute network with the same set of edges as \bm{$G$} with each edge $(u, v)$ assigned a positive similarity weight $s_{uv}$. (c) Graph \bm{$B$} is a linear combination of the each respective edge $(u, v)$ from \bm{$G$} and \bm{$A$}.}
\label{GraphModels}
\end{figure}

Formally, we define each of the three graph models \bm{$G$}, \bm{$A$} and \bm{$B$} in Eq.~\eqref{Graph_G}, Eq.~\eqref{Graph_A} and Eq.~\eqref{Graph_B}. Figure \ref{GraphModels} presents an illustrative example.

\begin{equation} \label{Graph_G}
 G(u,v) =
    \begin{cases}
      1, & \text{if $(u,v) \in E$ and } $u$ \neq $v$ \\
      0, & \text{otherwise}
    \end{cases}
\end{equation}

\begin{equation} \label{Graph_A}
    A(u,v) =
    \begin{cases}
      $J(u,v)$, & \text{if $(u,v) \in E$, }\ $u$ \neq $v$ \text{ and $J(u,v) > 0$} \\
      0.05, & \text{if $(u,v) \in E$, }\ $u$ \neq $v$ \text{ and $J(u,v) = 0$} \\
      $J(u,v)$, & \text{if $(u,v) \not\in E$, }\ $u$ \neq $v$ \text{ and $J(u,v) > t_e$} \\
      0, & \text{otherwise}
    \end{cases}
\end{equation}

\begin{equation} \label{Graph_B}
    B(u,v) =
    \begin{cases}
      %1, & \text{if $(u,v) \in E$ and $(u,v) \in G$} \\
      1 + $A(u,v)$, & \text{if $(u,v) \in E$ and $(u,v) \in A$} \\
      $A(u,v)$, & \text{if $(u,v) \not\in E$ and $(u,v) \in A$} \\
      0, & \text{otherwise}
    \end{cases}
\end{equation}

%In order to generate the attribute graph \bm{$A$} we utilize Jaccard Similarity as defined in Eq. \ref{JaccardSimilarity} to determine node-node similarity. In particular, we define the Jaccard Similarity $J(x_u, x_v)$ between two nodes $u$ and $v$ as the similarity between the set of their attribute properties $x_u$ and $x_v$.
%and (2) predict future edge connections between users---

% \begin{equation} \label{JaccardSimilarity}
%         J(x_u, x_v) = \frac{\mid x_u \cap x_v \mid}{\mid x_u \cup x_v|} = \frac{\mid x_u \cap x_v \mid}{\mid x_u \mid + \mid x_v \mid - \mid x_u \cup x_v \mid}
% \end{equation}

% where

% \begin{equation}
%     0 \leq J(x_u, x_v) \leq 1
% \end{equation}

\paragraph{\textbf{Conductance}}
%The traditional formulation for cut and conductance is not able to take into account how similar a local partition vertex is in relation to the local partition vs. the rest of the graph. Hence we propose a modified variation of the cut and conductance metric to take into account the additional information provided by the attribute edges---parallel conductance.

Conductance is a standard metric for determining how tight knit a set of vertices are in a graph \cite{Conductance}. The traditional conductance metric is defined in Eq.~\eqref{TraditionalConductance}, where $S$ is the set of vertices representing the local partition. The lower the conductance value $\phi(S)$, where $0 \leq \phi(S) \leq 1$, the more likely $S$ represents a good partition of the graph.

\begin{equation} \label{TraditionalConductance}
    \phi(S) = \frac{cut(S)}{min(vol(S), vol(\bar{S}))}
\end{equation}

\noindent Where the cut is $Cut(S) = \{(u,v) \in E | u \in S, v \notin S\}$, and the volume is $vol(S) = \sum\limits_{v \in S} \delta(v)$.

\hide{
\begin{equation}
    Cut(S) = \{(u,v) \varepsilon E | u \varepsilon S, v \centernot\varepsilon S\}
\end{equation}

and the volume is

\begin{equation}\label{Volume}
    vol(S) = \sum\limits_{v \epsilon S} \delta(v)
\end{equation}
}

This definition of conductance will serve as the benchmark to compare the results of our parallel conductance metric. 

\textit{Parallel Conductance.} We propose a parallel conductance metric which takes into account both the attribute and topological edges in the graph. Instead of simply adding the cut of each vertex $v \in S$, we want to determine whether $v$ is more similar to the vertices in $S$ or $\bar{S}$.  The new cut and conductance metric is formally defined in Eq.~\eqref{ModifiedCut} and Eq.~\eqref{ModifiedConductance}, respectively. The \textit{key} idea behind the parallel conductance metric is to determine whether each vertex in $S$ is more similar to $S$ or $\bar{S}$ using the additional information provided by the attribute links. % (see Definition \ref{CutProperties}).

\begin{equation} \label{ModifiedCut}
\begin{aligned}
    parallel\_cut(S) &= \sum\limits_{i \epsilon S}\frac{ \sum\limits_{j \not\epsilon S} B(i, j)}{ \sum\limits_{j \epsilon S} B(i, j)} = \sum\limits_{i \epsilon S}\frac{ \sum\limits_{j \not\epsilon S} \big[A(i, j) + G(i, j)\big]}{ \sum\limits_{j \epsilon S} \big[A(i, j) + G(i, j)\big]}\\
\end{aligned}
\end{equation}

By definition, \bm{$B$} can be split into its representative components, \bm{$G$} and \bm{$A$}. We also note a few \textit{key} properties of the parallel cut metric below:

%\begin{definition} \label{CutProperties}
%Given a set of vertices $S$ and parallel cut as defined in Eq.~\eqref{ModifiedCut}, the intuition of the parallel cut can be summarized as follows:
\begin{enumerate}
    \item $Parallel\_cut = 1$ means that the vertices in $S$ have connections of equal weighting between $S$ and $\bar{S}$.
    \item $Parallel\_cut < 1$ means that the vertices in $S$ have only a few strong connections to $\bar{S}$.
    \item $Parallel\_cut > 1$ means that the vertices in $S$ are more strongly connected to $\bar{S}$ than $S$.
\end{enumerate}
%\end{definition}

Eq.~\eqref{ModifiedConductance} uses the cut as defined in Eq.~\eqref{ModifiedCut} and the volume as defined above with the modification that $\delta(v)$ is a sum of it's components in \bm{$G$} and \bm{$A$}. 

\begin{equation} \label{ModifiedConductance}
\begin{aligned}
    \phi(S) &= \frac{parallel\_cut(S)}{vol(S)} \\
\end{aligned}
\end{equation}

We note that the parallel conductance metric has a different scale compared to the traditional conductance metric. For example, a conductance of 0.3 in the traditional conductance doesn't have the same meaning as a conductance of 0.3 in the parallel definition. We also bound the volume of $S$ to $vol(S) <1/2vol(B)$. This allows us to reduce the $min(vol(S), vol(\bar{S}))$ computation to $vol(S)$.

\begin{figure}[h!]
\includegraphics[height=5cm, width=7.5cm]{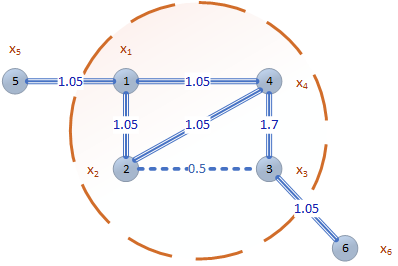}
\caption{A toy example calculating the parallel cut and conductance with local partition $S$ containing vertices $\{1, 2, 3, 4\}$. Parallel cut($V_1$) = 1.05/2.1 = 0.5, parallel cut($V_2$) = 0, parallel cut($V_3$) = 1.05/2.2 = 0.477, parallel cut($V_4$) = 0, parallel cut($Total$) = 0.5 + 0.477 = 0.977. Volume($S$) = 12. Parallel conductance($S$) = 0.977/12 = 0.0814.}
\end{figure}

\subsection{Algorithms}
We propose three algorithms in this subsection, including (1) \localproximity\, (2) \attripart\ and (3) \localforecasting. First, we introduce the \localproximity\ algorithm as a key building block for speeding-up the \attripart\ and \localforecasting\ algorithms by finding a subgraph containing only the nodes and edges relevant to the given seed node. Based on \localproximity, we further propose the \attripart\ algorithm to find a local partition around a seed node by minimizing the parallel conductance metric. Finally, we propose the \localforecasting\ algorithm, which builds upon \attripart, to predict a local community's evolution. %, utilizing the graph's network structure and attribute information---

\textbf{\localproximity.} There are two primary purposes for the \localproximity\ algorithm---(i) the requisite computations for the \localforecasting\ algorithm require a pairwise similarity calculation of all nodes, which is intractable for large graphs due to the quadratic run time. To make this computation feasible, we use the \localproximity\ algorithm to determine a small subgraph of relevant vertices around a given seed node $q$. (ii) We experimentally found that the PageRank vector utilized in the \attripart\ algorithm is significantly faster to compute after running the proposed \localproximity\ algorithm.

\textit{Algorithm Details.}
The goal is to find a subgraph $\bm{T}$ around seed node $q$, such that \bm{$T$} contains only nodes and edges likely to be reached in $n_w$ trials of random walk with restart. We base the importance of a vertex $v \in V$ on the theory that random walks can measure the importance of nodes and edges in a graph \cite{RandomWalkSubgraph}\cite{RandomWalkBetweenness}. This is done by defining node relevance proportional to the frequency of times a random walk with restart walks on a vertex in $n_w$ trials (nodes walked on more than once in a walk will still count as one). Instead of using a simple threshold parameter to determine node/edge relevance as in \cite{RandomWalkSubgraph}, we utilize the mean and standard deviation of the walk distribution in order for the results to remain insensitive of $n_w$ given that $n_w$ is sufficiently large. In conjunction with the mean and standard deviation, we introduce $t_s$ as a relevance threshold parameter to determine the size of the resulting subgraph \bm{$T$}. See section \ref{Algorithm_Analysis} for more details.

\textit{Algorithm Description.}
The \localproximity\ algorithm takes a graph \bm{$B$}, a seed node $q \in \bm{B}$, a teleport value $\alpha_r$, the number of walks to simulate $n_w$, a relevance threshold $t_s$---and returns a subgraph \bm{$T$} containing the relevant vertices in relation to $q$. This algorithm can be viewed in three major steps:

\begin{enumerate}
    \item Compute the walk distribution around seed node $q$ in graph \bm{$B$} using random walk with restart (line 2). We omit the Random Walk algorithm due to space constraints, however, the technique is described above.
    \item Determine the number of vertices to include in the subgraph \bm{$T$} based on the relevance threshold parameter $t_s$, mean of the walk distribution list $\mu(L)$ and the standard deviation of the walk distribution list $\sigma(L)$ (lines 4-6).
    \item Create a subgraph based on the included vertices (line 8).
\end{enumerate}

\begin{algorithm2e}[h]
    \KwIn{Graph \bm{$B$}, seed node $q$, teleport value $\alpha_r$, number of walks to simulate $n_w$, relevance threshold $t_s$}
    \KwResult{Subgraph \bm{$T$}}
    subgraph\_nodes = []\;
    $D$ = RandomWalk($q$, $\alpha_r$, $n_w$, \bm{$B$})\;
    $L$ = $D$.values\;

    \For{vertex $u$ in \bm{$B$}}{
        \uIf{$D[u]$ > $\mu(L)$ + $\sigma(L)$ / $t_s$}{
            subgraph\_nodes.append(u)\;
        }
    }
    \bm{$T$} = \bm{$B$}.subgraph(subgraph\_nodes)\;
    return \bm{$T$}\;
    
    \caption{Local Proximity}
    \label{NeighborhoodApproximation}
\end{algorithm2e}

% \begin{algorithm2e}[h]
%     \KwIn{$q$, $\alpha_r$, $n_w$, $B$}
%     \KwResult{$D$}
%     current\_vertex = start\_vertex; count = 0; $D$ = \{\};
    
%     \While{count $<$ $n_w$}{
%         walk\_vertices = [start\_vertex]\;
%         $D$[start\_vertex]++\;
%         \While{uniform\_random() $<$ $\alpha$}{
%             neighbors = $B$[current\_vertex]\;
%             current\_vertex = neighbors[rand\_int(0, len(neighbors))]\;
%             \uIf{current\_vertex \textbf{not in} walk\_vertices}{
%                 $D$[current\_vertex]++\;
%                 walk\_vertices.append(current\_vertex)\;
%             }
%         }
%         current\_vertex = start\_vertex
%     }
%     return $D$\;
    
%     \caption{Random Walk}
%     \label{RandomWalk}
% \end{algorithm2e}

\textbf{\attripart.} Armed with the \localproximity\ algorithm, we further propose an algorithm \attripart, which takes into account the network structure and attribute information contained in graph to find denser local partitions than can be found using the network structure alone. The foundation of this algorithm is based on \cite{Nibble}\cite{PageRank-Nibble}\cite{Leonid} with subtle modifications on lines 1, 4 and 9. These modifications incorporate the addition of a combined graph model, approximate PageRank computation using the \localproximity\ algorithm, and the parallel cut and conductance metric. In addition, \attripart\ doesn't depend on reaching a target conductance in order to return a local partition---instead it returns the best local partition found within sweeping $n_s$ vertices of the sorted PageRank vector.

\textit{Algorithm Description.} Given a graph \bm{$B$}, seed node $q \in V$, target conductance $\phi_o$, rank truncation threshold $\epsilon$, the number of iterations to compute the rank vector $t_{last}$, teleport value $\alpha_n$, rank iteration threshold $\epsilon_t$ and number of nodes to sweep $n_s$---\attripart\ will find a local partition $S$ around $q$ within $n_s$ iterations of sweeping. This algorithm can be viewed in five steps:

\begin{enumerate}[nolistsep]
    \item Set values for $\epsilon$ and $t_{last}$ as seen in Eq.~\eqref{epsilon_theory} and Eq.~\eqref{t_last_theory} respectively. We experimentally set $b = \frac{1 + log(m)}{2}$ and $\epsilon_t$ to 0.01. For additional detail on parameters $\epsilon$, $t_{last}$ and $b$ see \cite{Nibble}. For all other parameter values see Section \ref{ExperimentsAndDiscussion}.
    \item Run \localproximity\ around seed node $q$ in order to reduce the run time of the PageRank computations (line 1).
    \item Compute the PageRank vector using a lazy random transition with personalized restart---with preference vector $s$ containing all the probability on seed node $q$. At each iteration truncate a vertex's rank if it's degree normalized PageRank score is less than $\epsilon$ (lines 2-7). 
    \item Divide each vertex in the PageRank vector by its corresponding degree centrality and order the rank vector in descending order (line 8).
    \item Sweep over the PageRank vector for the first $n_s$ vertices, returning the best local partition $S$ found (lines 9-10). The \textit{sweep} works by taking the re-organized rank vector and creating a set of vertices $S$ by iterating through each vertex in the rank vector one at a time, each time adding the next vertex in the rank vector to $S$ and computing $\phi(S)$. %This process stops once the target conductance has been reached  or there exists no set of vertices matching the desired target conductance \cite{Nibble} \cite{PageRank-Nibble}.
\end{enumerate}

\begin{equation} \label{epsilon_theory}
    \epsilon = 1/(1800(l+2)t_{last}2^b)  
\end{equation}

\begin{equation}
    l=\ceil*{log_2(2m/2)}
\end{equation}

\begin{equation} \label{t_last_theory}
    t_{last} = (l+1)\ceil{\frac{2}{\phi^2}ln(c_1(l+2)\sqrt{2m/2})}
\end{equation}

\begin{algorithm2e}[h]
    \KwIn{Graph \bm{$B$}, seed node $q$, target conductance $\phi_o$, truncation threshold $\epsilon$, iterations $t_{last}$, teleport value $\alpha_n$, iteration threshold $\epsilon_t$, vertices to sweep $n_s$}
    \KwResult{Local partition $S$}
    \bm{$T$} = Local\_Proximity(\bm{$B$}, $q$, $\alpha_r$, $n_w$, $t_s$)\; 
    $\bm{D}_{i,i} = \delta(v_i)$\;
    $\bm{W} = \frac{1}{2}(\bm{I} + \bm{D}^{-1}\bm{T})$\;
    \For{$t = 1$ to $t_{last}$ \textbf{and} sum($\bm{q}_t$) - sum($\bm{q}_{t-1}$) < $\epsilon_t$}{
        $\bm{q}_t = (1 - \alpha)\bm{q}_{t-1}\bm{W} + \alpha s$\;
        $\bm{r}_t(i) = \bm{q}_t(i$) if $\bm{q}_t(i)/d(i) > \epsilon$, else 0\;
    }
 
    Order $i$ from large to small based on $\bm{r}_{t}(i)/d(i)$\;
    Sweep Parallel\_Conductance $\phi(S\{i=1..j\})$ while $i<n_s$\;
    
    If there is j : $\phi(S_j) < \phi_o$, return $S$\;
    \caption{AttriPart}
    \label{AttriPart}
\end{algorithm2e}

\textbf{\localforecasting.} As a natural application of the \attripart\ algorithm, we introduce a method to predict how local communities will evolve over time. This method is based on the \attripart\ algorithm with two significant modifications---(i) required use of the \localproximity\ algorithm to create a subgraph around the seed node and (ii) the use of the \expandedneighborhood\ algorithm to predict links between nodes in the subgraph. The idea behind using the \expandedneighborhood\ algorithm is that nodes are often missing many connections they will make in the future, which in turn affects the grouping of nodes into communities. To aid in predicting future edge connections we use Jaccard Similarity \cite{LinkPrediction} to predict the likelihood of each vertex connecting to the others---with edges added if the similarity between two nodes is greater than threshold $t_e$. %With these added links, which we call attribute links, the cold start nodes will now contain enough edges to determine what cluster they're likely to belong to in the future.

\textit{Algorithm Description.} Given a graph \bm{$B$}, a seed node $q \in V$, a target conductance $\phi_o$, a rank truncation threshold $\epsilon$, the number of iterations to compute the rank vector $t_{last}$, a teleport value $\alpha_n$, rank iteration threshold $\epsilon_t$, similarity threshold $t_e$ and number of nodes to sweep $n_s$---this algorithm will find a {\em predicted} local partition around $q$ within $n_s$ iterations of sweeping. As the \localforecasting\ algorithm is similar to \attripart, we highlight the three primary steps:

\begin{enumerate}[nolistsep]
    \item Determine the subgraph around a given seed node using the \localproximity\ algorithm (line 1).
    \item Determine the pairwise similarity between all nodes in the subgraph using Jaccard Similarity, adding edges that are above a given similarity threshold (line 2).
    \item Run the \attripart\ algorithm to find the predicted local partition around the seed node (line 3).
\end{enumerate}

% \begin{algorithm2e}[h]
%     \KwIn{$B$, $q$ $\phi_o$, $\epsilon$, $t_{last}$, $\alpha_n$, $t_e$}
%     \KwResult{Local Partition $S$}
    
%     $T$ = Local\_Proximity($B$, $q$)\; 
%     $T$ = Expanded\_Neighborhood($T$, $t_e$)
    
%     $D_{i,i} = \delta(v_i)$\;
%     $W = \frac{1}{2}(I + D^{-1}T)$\;
    
%     \For{$t = 1$ to $t_{last}$ \textbf{and} sum($q_t$) - sum($q_{t-1}$) < $\epsilon_t$}{
%         $q_t = (1 - \alpha)q_{t-1}W + \alpha s$\;
%         $r_t(i) = q_t(i$) if $q_t(i)/d(i) > \epsilon$, else 0\;
%     }
 
%     Order $i$ from large to small based on $r_{t_m}(i)/d(i)$\;
%     Compute Parallel\_Conductance, sweep $\phi(S\{i=1..j\})$\;
    
%     If there is j : $\phi(S_j) < \phi_o$, return S\;
%     \caption{Local Forecasting}
%     \label{LocalCommunityPrediction}
% \end{algorithm2e}

\begin{algorithm2e}[h]
    \KwIn{Graph \bm{$B$}, seed node $q$, target conductance $\phi_o$, truncation threshold $\epsilon$, iterations $t_{last}$, teleport value $\alpha_n$, iteration threshold $\epsilon_t$, similarity threshold $t_e$, vertices to sweep $n_s$}
    \KwResult{Predicted local partition $S$}
    
    \bm{$T$} = Local\_Proximity(\bm{$B$}, $q$)\; 
    \bm{$T$} = Expanded\_Neighborhood(\bm{$T$}, $t_e$) \;
    $S$ = AttriPart(\bm{$T$}, $q$, $\phi_o$, $\epsilon$, $t_{last}$, $\alpha_n$, $\epsilon_t$, $n_s$) \;
    return $S$ \;
    \caption{Local Forecasting}
    \label{LocalCommunityPrediction}
\end{algorithm2e}

\begin{algorithm2e}[h]
    \KwIn{Subgraph \bm{$T$}, edge addition threshold $t_e$}
    \KwResult{Subgraph \bm{$T$} with predicted edges}
    
    \For{$u$ in \bm{$T$}}{
        \For{$v$ in \bm{$T$} and $v$ not $u$}{
            u\_attr = $\bm{T}[u]$; v\_attr = $\bm{T}[v]$\;
            similarity\_score = JaccardSimilarity(u\_attr, v\_attr)\;
            \uIf{similarity\_score $>$ $t_e$ and not $\bm{T}[u][v]$}{
                $\bm{T}[u][v]$ = similarity\_score\;
            }
        }
    }
    return $T$\;
    \caption{Expanded Neighborhood}
    \label{ExpandedNeighborhood}
\end{algorithm2e}

\subsection{Analysis} \label{Algorithm_Analysis}

\paragraph{\textbf{Effectiveness}}

\localproximity\ (Algorithm~\ref{NeighborhoodApproximation}). The objective is to ensure that all relevant nodes in proximity to seed node $q$ are included. We use the fact that many real-world graphs follow a scale-free distribution \cite{ScaleFree1} \cite{ScaleFree2}, with many nodes containing only a few links while a handful encompasses the majority. In Figure \ref{RWR_Distribution}, we found that after running $n_w$ trials of random walk with restart, a scale-free like distribution formed---with a large majority of the nodes containing a small number of `hits', while a few nodes constituted the bulk.

\begin{wrapfigure}{RH}{0.25\textwidth}
\includegraphics[width=0.25\textwidth]{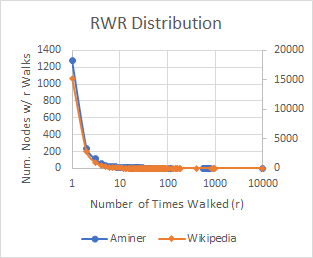}
\caption{Random walk w/ restart---distribution of node walk counts. $n_w$ = 10,000, $\alpha_r$ = 0.15; dataset: wikipedia, start vertex: `ewok', y-axis: right; dataset: Aminer, start vertex: 364298, y-axis: left. We omit nodes walked zero times in the graph, however, they're used in calculating $\mu(L)$, $\sigma(L)$.}
\label{RWR_Distribution}
\end{wrapfigure}

As the number of random walks $n_w$ is increased, the scale-free like distribution is maintained since each node is proportionally walked with the same distribution. We therefore need only some minimum value for $n_w$, which we set to 10,000. We use this skewed scale-free like distribution in combination with Eq.~\eqref{NodeRelevance} below to ensure the extraction of relevant nodes in relation to a query vertex.

Mathematically we define node relevance based on Eq.~\eqref{NodeRelevance}, where $D$ is a dictionary containing the walk count of each vertex and $D(v)$ represents the number of times vertex $v$ is walked in $n_w$ trials of the random walk with restart. $L$ is a list of each node's walk count in the graph, $\mu(L)$ is the average number of times all of the nodes in the graph are walked and $\sigma(L)$ is the standard deviation of the number of times all of the nodes in the graph are walked.  In section \ref{ExperimentsAndDiscussion} we discuss values of $t_s$ that have been shown to be empirically effective.

\begin{equation} \label{NodeRelevance}
    D(v) > \mu(L) + \sigma(L) / t_s
\end{equation}

After determining the relevant nodes we create a subgraph \bm{$T$} from a portion of the long-tail curve as defined by threshold parameter $t_s$ in conjunction with $\mu(L)$ and $\sigma(L)$. We say that subgraph \bm{$T$} contains $p \ll n$ nodes---with $p$ increasing nearly independently of the graph size (depending on threshold $t_s$). As seen in Figure \ref{RWR_Distribution} the number of nodes with $r$ walks converges independent of graph size.

\paragraph{\textbf{Efficiency}} All algorithms use the same data structure for storing the graph information. If a compressed sparse row (CSR) format is used, the space complexity is $O(2m+n+1)$. Alternatively, we note that with minor modification to the algorithms above we can use an adjacency list format with $O(n + m)$ space.
% \begin{lemma}
% The space complexity for \localproximity, \attripart\ and \localforecasting\ is $O(2m+n+1)$.
% \end{lemma}

% \begin{proof}
% All algorithms use the same data structure for storing the graph information. If a compressed sparse row (CSR) format is used, the space complexity is $O(2m+n+1)$ where $n$ is the number of nodes in the graph. Alternatively we note that with minor modification to the algorithms above, we can use an adjacency list format with $O(|V|+|E|)$ space.
% \end{proof}

\begin{lemma}[Time Complexity]
\localproximity\ has a time complexity of $O(n+m_p+n_w)$ while \attripart\ has a time complexity of $O(p^2+pm_p+n+n_w)$ and \localforecasting\ a time complexity of $O(p^2+pm_e+n+n_w)$.
\end{lemma}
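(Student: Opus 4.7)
The plan is to bound the cost of each algorithm line by line and then combine the dominant terms. I will treat the random walk with restart, the sweep procedure, the truncated lazy PageRank iteration, and the pairwise Jaccard step as the four basic primitives whose costs I need to pin down before assembling the three bounds.

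For \localproximity\ (Algorithm~\ref{NeighborhoodApproximation}), the random walk with restart on \bm{$B$} seeded at $q$ performs $n_w$ steps in total (each step moves to a neighbor or teleports, which is $O(1)$ amortized when the adjacency list of the current vertex is indexed), contributing $O(n_w)$. Building the dictionary $D$ and the list $L$ then computing $\mu(L)$ and $\sigma(L)$ is a single pass, which is $O(n)$. The loop on line~4 iterates once per vertex in \bm{$B$}, contributing another $O(n)$. Finally, extracting the induced subgraph on the chosen vertex set requires scanning the edges incident to those vertices, which is $O(m_p)$ since $\bm{T}$ has $m_p$ edges. Summing gives $O(n + m_p + n_w)$.

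For \attripart\ (Algorithm~\ref{AttriPart}) the call to \localproximity\ contributes $O(n + m_p + n_w)$. The power iteration on line~5 acts on a vector supported on \bm{$T$}, so one step of $\bm{q}_{t-1} \bm{W}$ costs at most $O(m_p)$ using the sparse representation of \bm{$T$}; running the loop up to $t_{last}$ times and folding in the truncation step on line~6 gives $O(t_{last}\, m_p)$. Here I would use the fact that, because the truncation threshold $\epsilon$ guarantees that $\bm{r}_t$ has support size $O(p)$ and because the iteration bound $t_{last}$ is chosen so that the number of effective updates is $O(p)$ (this is the standard push/Nibble-style accounting), the aggregate cost of the PageRank loop is bounded by $O(p\, m_p)$. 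Sorting the rank vector on line~8 costs $O(p \log p)$, which is absorbed. The sweep on line~9 visits at most $n_s \le p$ prefixes, and each incremental evaluation of the parallel conductance of $S_j$ requires looking at all edges incident to the newly added vertex \emph{and} to the already included vertices, because the parallel cut in Eq.~\eqref{ModifiedCut} sums a ratio over every $i \in S$; in the worst case this is $O(p)$ work per prefix, giving $O(p^2)$ total for the sweep. Adding these pieces yields $O(p^2 + p\,m_p + n + n_w)$.

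For \localforecasting\ (Algorithm~\ref{LocalCommunityPrediction}) I would reuse the previous two bounds. Line~1 costs $O(n + m_p + n_w)$. Line~2, \expandedneighborhood, loops over every ordered pair of vertices in \bm{$T$} and evaluates a Jaccard similarity on constant-size attribute records, which is $O(p^2)$; after this step the edge count of \bm{$T$} grows from $m_p$ to $m_e$. Line~3 then runs \attripart\ on the augmented subgraph, which by the analysis above is $O(p^2 + p\,m_e + n + n_w)$, where the inner \localproximity\ call on a subgraph is absorbed by $O(n + m_e)$. Collecting dominant terms gives the stated $O(p^2 + p\,m_e + n + n_w)$.

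The main obstacle will be justifying the $O(p\,m_p)$ aggregate cost of the truncated PageRank loop rather than the naive $O(t_{last}\,m_p)$: this requires invoking the amortized accounting from \cite{Nibble, PageRank-Nibble} together with the truncation rule on line~6 to argue that the support of $\bm{r}_t$ stays in $O(p)$ throughout and that the total work over all iterations telescopes. Everything else is routine bookkeeping on sparse graph operations.
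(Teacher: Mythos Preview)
Your decomposition for \localproximity\ and \localforecasting\ matches the paper's proof essentially line for line. The divergence is in \attripart, specifically in where the $O(p\,m_p)$ term originates.

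In the paper, the $O(p\,m_p)$ term comes from the \emph{sweep} (lines~9--10), not from the PageRank loop. Computing the parallel cut of Eq.~\eqref{ModifiedCut} for a prefix $S$ requires, for every $i\in S$, scanning the edges incident to $i$; doing this from scratch for up to $p$ prefixes costs $\sum_{k=1}^{p} O(m_p)=O(p\,m_p)$, and together with the $O(p)$ bookkeeping per prefix this yields the paper's $O(p^2+p\,m_p)$ for step~6. Your $O(p^2)$ sweep bound undercounts here: ``$O(p)$ work per prefix'' only holds if you assume each ratio can be updated in $O(1)$ when a vertex migrates from $\bar S$ to $S$, but even then the edges of the newly added vertex must be scanned, and the paper does not assume an incremental implementation.

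Conversely, the paper bounds the PageRank loop (lines~4--7) simply as $t_{last}$ iterations of cost $O(m_p+p)$ each, and then drops this term from the final sum without further argument---effectively treating $t_{last}$ as absorbed. It does \emph{not} invoke any Nibble/PageRank-Nibble amortized push accounting. Your proposed ``main obstacle'' is therefore not something the paper addresses; moreover, that amortization applies to the push-based approximate PageRank of~\cite{PageRank-Nibble}, not to the explicit power iteration written in Algorithm~\ref{AttriPart}, so importing it here would require reworking the algorithm, not just the analysis.

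In short: move the $p\,m_p$ charge from the PageRank loop to the sweep, raise your sweep bound to $O(p^2+p\,m_p)$, and either treat $t_{last}$ as a constant parameter or note (as the paper implicitly does) that the iteration cost is subsumed. With that reallocation your argument coincides with the paper's.
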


\begin{proof}
\localproximity: There are three major components to this algorithm: (1) $n_w$ random walks with walk length $l$ for a time complexity of $O(n_w)$ (line 2). (2) Linear iteration through the number of nodes taking $O(n)$ (lines 4-7). (3) Subgraph \bm{$T$} creation based on the number of included vertices $p$ with node set $V_t$---requiring iteration through every edge of node $v \in V_t$ for $m_p$ total edges. Iterating through every edge is linear in the number of edges for a time complexity of $O(m_p)$ (line 8). This leads to a total time complexity of $O(n+m_p+n_w)$

\attripart: There are six major steps to this algorithm: (1) calling \localproximity\ which returns a subgraph \bm{$T$} containing $p$ nodes and $m_p$ edges for a time complexity of $O(n+m_p+n_w)$ (line 1). (2) Creating a diagonal degree matrix by iterating through each node in \bm{$T$} with time complexity $O(p)$ (line 2). (3) Creating the lazy random walk transition matrix \bm{$W$}, which requires $O(m_p)$ from multiplying the corresponding matrix entries (line 3). (4) In lines 4-7 we iterate for $t_{last}$ iterations, with each iteration (i) updating the rank vector by multiplying the corresponding edges in the transition matrix \bm{$W$}, with the rank vector \bm{$q$} for a time complexity of $O(m_p)$ and (ii) truncating every vertex with rank $\bm{q}_t(i)/d(i) \leq \epsilon$ for a time complexity linear in the number of nodes in the rank vector $O(p)$. (5) Sort the rank vector which will be upper bounded by $O(plogp)$ (line 8). (6) Compute the parallel conductance, which takes $O(p^2+pm_p)$ time (lines 9-10). Combining each step leads to a total time complexity of $O(p^2+pm_p+n+n_w)$.

\localforecasting: This algorithm has three major steps: (1) run the \localproximity\ algorithm, which has a time complexity of $O(n+m_p+n_w)$. (2) Perform the \expandedneighborhood\ algorithm, which densifies \bm{$T$} by adding predicted edges for a total of $m_e$ edges in \bm{$T$}. This algorithm has a time complexity of $O(p^2)$ due to the nested for loops. (3) Run the \attripart\ algorithm, which has a time complexity of $O(p^2+pm_e+n+n_w)$ with the modification of $m_p$ to $m_e$ for the additional edges. This leads to an overall time complexity of $O(p^2+pm_e+n+n_w)$.
\end{proof}

While \attripart\ and \localforecasting\ both scale quadratically with respect to $p$, we note that in practice these algorithms are very fast since $p \ll n$ and $p$ scales nearly independent of graph size as shown in section \ref{Algorithm_Analysis}.

\section{Experiments} \label{ExperimentsAndDiscussion}
In this section, we demonstrate the effectiveness and efficiency of the proposed algorithms on three real-world network datasets of varying scale.

\subsection{Experiment setup}
\textbf{Datasets.} We evaluate the performance of the proposed algorithms on three datasets---(1) the Aminer co-authorship network \cite{Aminer}, (2) a Musician network mined from DBpedia and (3) a subset of Wikipedia entries in DBpedia containing both abstracts and links. All three networks are undirected with detailed information on each below:

\begin{itemize}
    \item \textbf{Aminer.} Nodes represents an author, with each author containing a set of topic keywords, and an edge representing a co-authorship. To form the attribute network, we compute attribute edges based on the similarity between two authors for every network edge, using Jaccard Similarity on the corresponding authors's topic set.
    
    \item \textbf{Musician.} Nodes represent a Musician, with each Musician containing a set of music genres, and an edge representing two Musicians who have played in the same band. To form the attribute network, we compute attribute edges based on the similarity between two Musicians for every network edge, using Jaccard Similarity on the corresponding artist's music genre set.
    
    \item \textbf{Wikipedia.} Nodes represent an entity, place or concept from Wikipedia which we will jointly refer to as an item. Each item contains a set of defining key words; with edges representing a link between the two items. The dataset originates from DBpedia as a directed graph with links between Wikipedia entries. We modify the graph to be undirected for use with our algorithms---which we believe to be a reasonable as each edge denotes a relationship between two items. In addition, this dataset uses only a portion of the Wikipedia entries containing both abstracts and links to other Wikipedia pages found in DBpedia. To form the attribute network, we compute attribute edges based on the similarity between two items for every network edge using Jaccard Similarity on the corresponding item's key word set.
\end{itemize}

\begin{table}[h!]
\begin{center}
 \begin{tabular}{|c | c | c | c |} 
 \hline
 Category & Network & Nodes & Edges \\ [0.5ex] 
 \hline\hline
 Aminer & Co-Author & 1,560,640 & 4,258,946 \\ 
 \hline
 Musician & Co-Musician & 6,006 & 8,690 \\
 \hline
 Wikipedia & Link & 237,588 & 1,130,846 \\
 \hline
\end{tabular}
\end{center}
\caption{\large{Network Statistics}}
\label{table:2}
\end{table}

\textbf{Metrics.} 
(1) To benchmark  the \localproximity\ algorithm's effectiveness and efficiency,  we compare (i)  the difference between local partition created with and without the \localproximity\ algorithm on \attripart and (ii) the run time and difference between the top 20 PageRank vector entries with and without the \localproximity\ algorithm. (2) To benchmark the \attripart\ algorithm's effectiveness and efficiency we compare the triangle count, node count, local partition density and run time to PageRank-Nibble. Normally, PageRank-Nibble does not return a local partition if the target conductance is not met, however, we modify it to return the best local partition found---even if the target conductance is not met. This modification allows for more comparable results to \attripart. (3) To provide a baseline for the \localforecasting\ algorithm's effectiveness, we compare the local partition results to \attripart\ on two graph missing 15\% of their edges.

\textbf{Repeatability.} All data and source code used in this research will be made publicly available. The Aminer co-authorship network can be found on the Aminer website \footnote{https://Aminer.org/data}; the Musician and Wikipedia datasets used in the experiments will be released on the author's website. All algorithms and experiments were conducted in a Windows environment using Python. %We note that all conductance calculations are limited to the first 200 vertices in the sorted PageRank vector, where the partition with the lowest conductance is returned. This is done to reduce the time to gather data.

\subsection{Effectiveness}
\textbf{\localproximity.} In Figure \ref{NeighborhoodResults} parts (a)-(c), we can see that the proposed \localproximity\ algorithm significantly reduces the computational run time, while maintaining high levels of accuracy across both metrics. Parts (a)-(b) demonstrate to what extent the accuracy of the results are dependent upon the parameter values. In particular, a low value of $\alpha_r$ (random walk alpha) and a high value of $t_s$ (relevance threshold) are critical to providing high accuracy results. 

In Figure \ref{NeighborhoodResults} part (a), we measure accuracy as the number of vertices that differ between the local partitions w/ and w/o the \localproximity\ algorithm on \attripart. A small partition difference indicates that the \localproximity\ algorithm finds a relevant subgraph around the given seed node and that the full graph is unnecessary for accurate results. In part (b), we define the accuracy of the results to be the difference between the set of top {\em 20} entries in the PageRank vectors for the full graph and subgraph using the \localproximity\ algorithm. Overall, the results from part (b) correlate well to (a)---showing that for low values of $\alpha_r$ (random walk alpha) and high values of $t_s$ (relevance threshold), their is negligible difference between the results computed on the full graph and the subgraph found using the \localproximity\ algorithm.

\begin{figure}
%   \begin{subfigure}{\linewidth}
%   \includegraphics[width=\linewidth]{Pictures/Neighborhood_Approximation/Conductance.png}\hfill
%   \caption{Y-axis represents the difference in vertices between the local partition calculated w/ and w/o the \localproximity\ algorithm.}
%   \end{subfigure}\par\medskip
  \begin{subfigure}{\linewidth}
  \includegraphics[width=\linewidth]{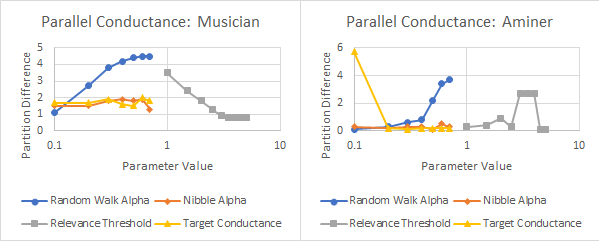}\hfill
  \caption{Y-axis represents the difference in vertices between the local partition calculated w/ and w/o the \localproximity\ algorithm.}
  \end{subfigure}\par\medskip
  \begin{subfigure}{\linewidth}
  \includegraphics[width=\linewidth]{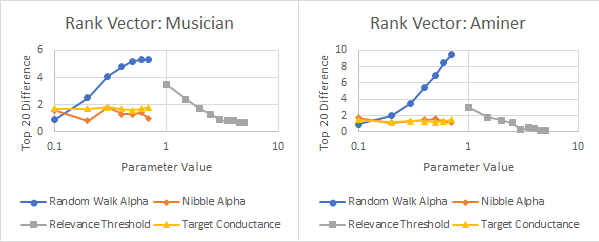}\hfill
  \caption{Y-axis represents the \# of vertices differing between the top 20 rank vector entries w/ and w/o the \localproximity\ algorithm.}
  \end{subfigure}
  \begin{subfigure}{\linewidth}
  \includegraphics[width=\linewidth]{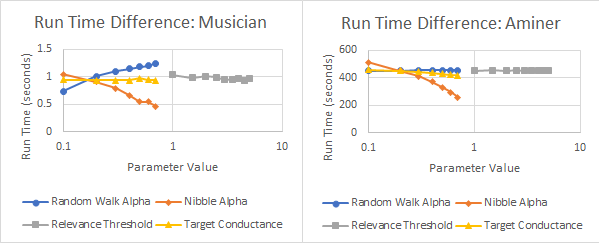}\hfill
  \caption{Y-axis represents the difference in run time between the PageRank calculation w/ and w/o the \localproximity\ algorithm.}
  \end{subfigure}
  \caption{Each data point averages 10 randomly sampled vertices in both the Aminer and Musician datasets. Default parameters (unless sweeped across): $\alpha_n$ =  0.2, $\alpha_r$ = 0.15, $\phi_o$ = 0.2, $t_s$ = 2, $n_w$ = 10,000, $n_s$ = 200. Parameter ranges: $\alpha_r$, $\alpha_n$ and $\phi_o$ [0.1-0.7] in 0.1 intervals; $t_s$ [1-5] in 0.5 intervals.}
  \label{NeighborhoodResults}
\end{figure}

% In Figure \ref{NeighborhoodResults} part (a), we notice that the partition difference is significantly higher for most parameter values in the Aminer dataset compared to its counterparts in the Musician dataset. This is likely caused by the different network structure of the datasets---with the Aminer dataset being significantly denser than the Musician dataset. Upon visual analysis of the Musician dataset we find that no giant connected component exists. The result is, in some graphs a given target conductance may be significantly harder to reach than others. In addition, in part (b), \attripart\ is not experiencing the same difficulties due to the fact that the parallel conductance values are not directly comparable to (a).

\textbf{\attripart.} In Figure \ref{LocalPartitionResults}, we see that \attripart\ finds significantly denser local partitions than PageRank-Nibble---with local partition densities approximately \textbf{1.6$\times$}, \textbf{1.3$\times$} and \textbf{1.1$\times$} higher in \attripart\ than PageRank-Nibble in the Aminer, Wikipedia and Musician datasets respectively. Density is measured as $\frac{2m}{n(n-1)}$ where $m$ is the number of edges and $n$ is the number of nodes. 

In Figure \ref{LocalPartitionResults}, we observe that the triangle count of the \attripart\ algorithm is lower than PageRank-Nibble in the Musician and Aminer datasets. We attribute this to the fact that \attripart\ is finding smaller partitions (as measured by node count) and, therefore, there are less possible triangles. We also note that each triangle is counted three times, once for each node in the triangle. While no sweeps across algorithm parameters were performed, we believe that the gathered results provide an effective baseline for parameter selection. %We also notice in the Wikipedia dataset, that when the triangle count for \attripart\ is higher than PageRank-Nibble, so is the node count. 

\begin{figure} 
  \begin{subfigure}{\linewidth}
  \includegraphics[width=\linewidth]{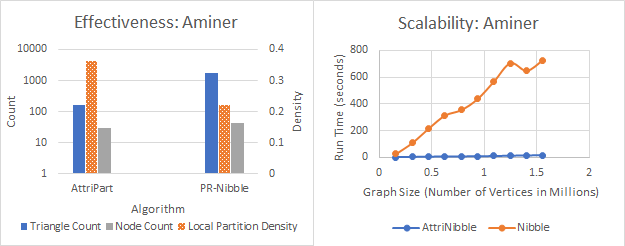}
  \caption{Scalability: Each data point represents the Aminer dataset in 1/10th intervals, with each point averaged over 3 randomly sampled vertices. Parameters: $\alpha_n$ =  0.2, $\alpha_r$ = 0.15, $\phi_o$ = 0.2, $t_s$ = 2, $n_w$ = 10,000, $n_s$ = 200.}\hfill
  \end{subfigure}\par\medskip
  
  \begin{subfigure}{\linewidth}
  \includegraphics[width=\linewidth]{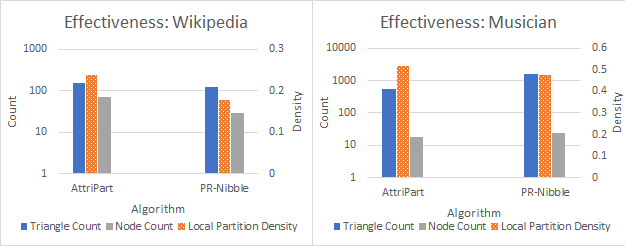}\hfill
  \caption{}
  \end{subfigure}\par\medskip
  \caption{Effectiveness: results are averaged over 20 and 100 randomly sampled vertices in the Aminer/Wikipedia and Musician datasets, respectively. Parameters: $\alpha_n$ =  0.2, $\alpha_r$ = 0.15, $\phi_o$ = 0.05, $t_s$ = 2, $n_w$ = 10,000, $n_s$ = 200.}
  \label{LocalPartitionResults}
\end{figure}

\textbf{\localforecasting.} In order to measure the effectiveness of the \localforecasting\ algorithm we setup the following experiment with three local partition calculations: (1) calculate the local partition using \attripart, (2) calculate the local partition using \attripart\ with 15\% of the edges randomly removed from the graph and (3) calculate the local partition using the \localforecasting\ algorithm with 15\% of the edges randomly removed from the graph. We treat (1) as the baseline local community and want to test if (3) finds better local partitions than (2). The idea behind randomly removing 15\% of the edges in the graph is to simulate the evolution of the graph over time and test if the \localforecasting\ algorithm can predict better local communities in the future. Ideally, we would have ground-truth local community data for a rich graph with time series snapshots, however, in its absence we use the above method.  

In Figure \ref{PredictionEffectivness}, each data point is generated in three steps---(i) taking the difference between the set of vertices and edges in local partitions (1) and (3), (ii) taking the difference between the set of vertices and edges in local partitions (1) and (2) and (iii) by taking the difference between (ii) and (i). Step (i) tells us how far off the \localforecasting\ algorithm is from the baseline, step (ii) tells us how far off the local partition would be from the baseline if no prediction techniques were used and step (iii) tells us the difference between the local partitions with and without the \localforecasting\ algorithm (which is what we see graphed in Figure \ref{PredictionEffectivness}).

In Figure \ref{PredictionEffectivness}, we see that the local partition prediction accuracy, for both the edges and vertices, is above the baseline calculations in the Aminer dataset for a majority of edge similarity threshold values ($t_e$). The best results were obtained when $t_e$ is 0.6, with an average of 1.4 vertices and 2.75 edges predicted over the baseline using the \localforecasting\ algorithm. This number, while relatively small, is an average of 20 randomly sampled vertices---with one result reaching up to 14 vertices and 26 edges over baseline. In addition, we can see that the Musician dataset does not perform as well as the Aminer dataset, with most of the prediction results performing worse than the baseline (as indicated by the negative difference). We believe that this result on the Musician dataset is due to the different nature of each dataset's network structure---with the Musician dataset being significantly more sparse (no giant connected component) than the Aminer dataset. 

\begin{figure}[h!]
\includegraphics[height=3.34cm, width=8.5cm]{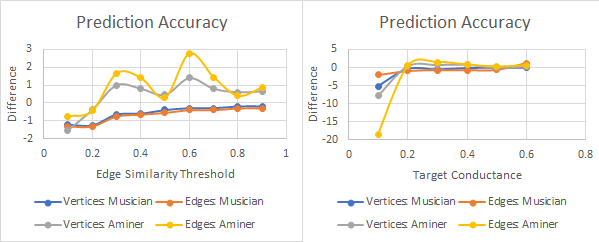}
\caption{Each data point averages 20 randomly sampled vertices in the Aminer and Musician datasets. Default parameters (unless sweeped across): $\alpha_n$ =  0.2, $\alpha_r$ = 0.15, $\phi_o$ = 0.2, $t_s$ = 5, $t_e$ = 0.7, $n_w$ = 10,000, $n_s$ = 200. Parameter ranges: $t_e$ [0.1-0.9] in 0.1 intervals, $\phi_o$ [0.1-0.6] in 0.1 intervals.}
\label{PredictionEffectivness}
\end{figure}

%Figure \ref{PredictionEffectivness} also demonstrates that the \localforecasting\ algorithm is sensitive to the choice of $t_e$ and $\phi_o$. We note that the number of edges removed from the graph can have a significant effect on the algorithms performance and chose 15\% as a reasonable baseline.

\subsection{Efficiency} For both the proposed and baseline algorithms, the efficiency results represent only the time taken to run the algorithm (e.g. not including loading data into memory). \textbf{\localproximity.} Across a majority of the parameters the run time for the full graph PageRank computation is approximately 450 seconds longer compared to computing the PageRank vector based on the \localproximity\ sugraph. \textbf{\attripart.} In Figure \ref{LocalPartitionResults}, we see that the \attripart\ algorithm finds local partitions \textbf{43$\times$} faster than PageRank-Nibble. \textbf{\localforecasting.} This algorithm has an expected run time nearly identical to \attripart, we therefore refer the reader to Figure \ref{LocalPartitionResults} for run time results.  %We again note that the conductance calculations are limited to the first 200 vertices in the PageRank vector for both \attripart\ and Nibble, where the partition with the lowest conductance is returned. While both algorithms scale linearly, \attripart\ has a significantly smaller coefficient, resulting in run times that are nearly instant on the Aminer dataset. 

\section{Related Work}
%The goal of local clustering algorithms is to find a set of vertices in the surrounding area of a given seed node such that the set of vertices forms a tight-knit group defined by the conductance of the set. Converse to the problem of local clustering, global clustering algorithms take into account the whole network in order to identify all of the existing network clusters. Both local and global clustering are areas of significant research and as such we will provide a short, high-level overview of a select few topics with a focus on the research that pertains to the algorithms we present.

%While this paper's focus is on local community detection, we believe it's relevant to give an overview of local and global community detection methods in addition to random walk literature. As all three are areas of significant research, we provide a high-level overview of select topics with a focus on the research that pertains to the algorithms we present.
We provide a high level review of both local and global community detection methods, with a focus on the research that pertains to the algorithms we propose in this paper. %esent. % in addition to random walks

\noindent {\bf A - Local Community Detection.} \label{NibbleExplained}
Given an undirected graph, start vertex and a target conductance---the goal of Nibble is to find a subset of vertices that has conductance less than the target conductance \cite{Nibble}. This algorithm has strong theoretical properties with a run time of $O(2^b(log^6m)/\phi^4)$, where $b$ is a user defined constant, $\phi$ is the target conductance and $m$ is the number of edges. {PageRank-Nibble} builds on the work of Nibble by introducing the use of personalized PageRank \cite{PersonalizedPageRank,Tong06@EUROSIP}, in addition to an algorithm for the computation of approximate PageRank vectors \cite{PageRank-Nibble}. Since PageRank-Nibble and Nibble run on undirected graphs, they use truncated random walks in order to prevent the stationary distribution from becoming proportional to the degree centrality of each node \cite{UndirectedPageRank}. There are also many {alternative techniques} for local community detection. To name a few, the paper by Bagrow and Bollt \cite{LocalCommunity} introduces a method of local community identification that utilizes an $l$-shell spreading outward from a start vertex. However, their algorithm requires knowledge of the entire graph and is therefore not truly local. The research by J. Chen et. al. \cite{LocalSocial} proposes a method for local community identification in social networks that avoids the use of hard to obtain parameters and improves the accuracy of identified communities by introducing a new metric. In addition, the work by \cite{LocalClusterHighOrderNetwork} and \cite{HighOrderClustering} introduces two methods of local community identification that take into account high-order network structure information. In \cite{LocalClusterHighOrderNetwork}, the authors provide mathematical guarantees of the optimality and scalability of their algorithms, in addition to the generalization of it to various network types (e.g. signed and multi-partite networks).

\noindent {\bf B - Global Community Detection.} The basic idea behind the Walktrap algorithm is that random walks on a graph tend to get "trapped" in densely connected parts that correspond to communities \cite{Walktrap}. Utilizing the properties of random walks on graphs, they define a measurement of structural similarity between vertices and between communities, creating a distance metric. The algorithm itself has an upper bound of $O(mn^2)$. Another popular choice for global community detection is spectral analysis. In the paper by M. Newman \cite{NewmanSpectral} it is shown that the problems of community detection by modularity maximization, community detection by statistical inference and normalized-cut graph partitioning when tackled using spectral methods, are in fact, the same problem. The work by S. White et. al. in \cite{Spectral1} attempts to find communities in graphs using spectral clustering. They achieve this by using an objective function for graph clustering \cite{NewGir04} and reformulating it as a spectral relaxation problem, for which they propose two algorithms to solve it. A systematic introduction to spectral clustering techniques can be found in \cite{SpectralTutorial}. There also exists many alternative techniques for global community detection. Among others, two interesting techniques relevant to this work are \cite{CommunityDetectionAttributes} \cite{CommunityPrediction}. In \cite{CommunityDetectionAttributes}, the authors propose a community detection algorithm that uses the information in both the network structure and the node attributes, while in \cite{CommunityPrediction} the authors use network feature extraction to predict the evolution of communities. A detailed review of various community detection algorithms can be found in \cite{ClusteringOverview}.  

%\emph{Random Walks.} Random walks have been utilized in many forms, two of which we focus on in this paper. (1) As a form of node ranking in graphs with use of PageRank \cite{PageRank} and it's extension to personalized PageRank \cite{PersonalizedPageRank}. (2) As a method of subgraph identification \cite{RandomWalkSubgraph} around a given seed node. In \cite{RandomWalkSubgraph} the authors attempt to determine the relevant subgraph between two or mode seed nodes in a graph using the technique of random betweenness centrality introduced by M. Newman \cite{RandomWalkBetweenness}.

% Traditionally, the PageRank vector has a uniform distribution for the preference vector, however, we will use a personalized PageRank vector \cite{PersonalizedPageRank} which has a non-uniform distribution over the vertices. In particular, the preference vector $s$ will have all the probability on a single vertex $q$, the seed node.

% \textbf{Truncated Random Walks.} Since the AttriPart algorithm computes the PageRank \cite{PageRank} vector on an undirected graph, it uses truncated random walks in order to prevent the stationary distribution from becoming proportional to the degree centrality of each node \cite{UndirectedPageRank}\cite{Nibble}. The truncated random walks are computed by the following truncation operation:

\section{Conclusion}
%The goal of this work is to develop the foundational framework and associated algorithms to model rich attribute graphs towards 
This paper proposes new algorithms for attributed graphs, with the goal of (i) computing denser local graph partitions and (ii) predicting the evolution of local communities. We believe that the proposed algorithms will be of particular interest to data mining researchers given the computational speed-up and enhanced dense local partition identification. The proposed local partitioning algorithm \attripart\ has already \textbf{deployed} to the web platform \href{www.path-finder.io}{PathFinder} (www.path-finder.io) \cite{RapidAnalysisNetworkConnectivity} and allow users to interactively explore all three datasets presented in the paper. In addition, the source code and datasets will be made publicly available by the conference date.

% \begin{acks}

% \end{acks}

\bibliographystyle{ACM-Reference-Format}
\bibliography{sample-bibliography}

\end{document}